\documentclass[sigconf]{acmart}
\usepackage{microtype}
\usepackage{graphicx}
\usepackage{subcaption}
\usepackage{booktabs} 
\usepackage{hyperref}
\usepackage{booktabs} 
\usepackage{multirow} 
\usepackage{graphicx}
\usepackage{amsmath}

\usepackage{amssymb}
\usepackage{mathtools}
\usepackage{amsthm}
\usepackage{algorithm}
\usepackage{algpseudocode}
\usepackage[capitalize,noabbrev]{cleveref}
\AtBeginDocument{%
  }

\begin{document}

%%
%% The "title" command has an optional parameter,
%% allowing the author to define a "short title" to be used in page headers.
\title{Just Talk Once: Communication-Efficient Split Federated LLM Fine-Tuning on Edge Devices}

%%
%% The "author" command and its associated commands are used to define
%% the authors and their affiliations.
%% Of note is the shared affiliation of the first two authors, and the
%% "authornote" and "authornotemark" commands
%% used to denote shared contribution to the research.
\author{Jiaxiang Geng\textsuperscript{1,2}, Xianhao Chen\textsuperscript{2}, Bing Luo\textsuperscript{1}}

\affiliation{
  \institution{\textsuperscript{1}Duke Kunshan University, Jiangsu, China
  \quad
  \textsuperscript{2}The University of Hong Kong, Hong Kong, China}
  \country{}
}

\email{{jg645,bl291}@duke.edu, xcheneee@hku.hk}

\thanks{The work of X. Chen was supported in part by the Research Grants Council of Hong Kong under Grant 27213824, Grant 17207826, and Grant CRS HKU702/24. The work of B. Luo was supported by Suzhou Frontier Science and Technology Program (Project SYG202310). \\ Corresponding Author: Bing Luo.}

%%
%% By default, the full list of authors will be used in the page
%% headers. Often, this list is too long, and will overlap
%% other information printed in the page headers. This command allows
%% the author to define a more concise list
%% of authors' names for this purpose.
\renewcommand{\shortauthors}{Jiaxiang Geng, Xianhao Chen, Bing Luo}

%%
%% The abstract is a short summary of the work to be presented in the
%% article.
%\begin{abstract}
%The imminent exhaustion of public datasets has shifted the focus of Large Language Model (LLM) fine-tuning toward leveraging massive amounts of private data generated on edge devices. While Split Federated Learning (SFL) enables collaborative training, existing privacy-preserving architectures (e.g., U-shaped SFT) suffer from a prohibitive bidirectional communication loop, rendering them inefficient for unstable edge environments. To resolve this, we propose L-shaped SFT, a novel unidirectional framework that achieves extreme communication efficiency while preserving strict privacy. By leveraging the semantic consistency of LLMs, we offload loss computation to the server without exposing private labels and introduce activation-based loss. To protect intermediate activations against inversion attacks, we introduce a lightweight dual-matrix obfuscation mechanism that rotates the semantic space using isometric transformations. Furthermore, we propose a one-shot workflow that allows clients to upload only once and remain idle for the afterwards training duration. Extensive experiments on MMLU and WikiText-2 demonstrate that our approach reduces communication overhead by $56\%$ compared to U-shaped baselines, establishing a new state-of-the-art for private, bandwidth-efficient edge LLM fine-tuning.
%\end{abstract}

\begin{abstract}
Large language model (LLM) fine-tuning is increasingly shifting toward data generated on edge devices, where memory, computation, bandwidth, and connectivity constraints make conventional federated learning difficult to sustain. Split federated fine-tuning (SFT) improves client-side efficiency by offloading most model parameters and computation to the server but requires step-by-step bidirectional communication loop across the split interface and forces continuous client involvement throughout training. In this paper, we present \textbf{L-shaped SFT}, a split fine-tuning framework that removes this bidirectional bottleneck. Our key insight is that weight tying in modern LLMs enables server-side hidden activations to be directly supervised using target embeddings, allowing the training loss to be computed on the server without returning server outputs to the client. To further eliminate the need for continuous client participation, based on L-shaped SFT, we introduce \textbf{one-shot SFT}, in which clients upload activations once and then go offline while the server continues optimization over cached representations. We implement our design in a real system testbed with heterogeneous edge clients, including commercial smartphones and NVIDIA developer boards. Experiments demonstrate that our schemes significantly reduce communication costs and client online time compared with existing SFT baselines.
\end{abstract}

%%
%% The code below is generated by the tool at http://dl.acm.org/ccs.cfm.
%% Please copy and paste the code instead of the example below.
%%
\begin{CCSXML}
<ccs2012>
   <concept>
       <concept_id>10010147.10010178.10010219</concept_id>
       <concept_desc>Computing methodologies~Distributed artificial intelligence</concept_desc>
       <concept_significance>500</concept_significance>
       </concept>
 </ccs2012>
\end{CCSXML}

\ccsdesc[500]{Computing methodologies~Distributed artificial intelligence}

%%
%% Keywords. The author(s) should pick words that accurately describe
%% the work being presented. Separate the keywords with commas.
\keywords{large language model, split federated fine-tuning, edge system}
%% A "teaser" image appears between the author and affiliation
%% information and the body of the document, and typically spans the
%% page.

%%
%% This command processes the author and affiliation and title
%% information and builds the first part of the formatted document.
\maketitle

\section{Introduction}

In recent years, Large language models (LLMs) have achieved remarkable success across various domains \cite{openai2024gpt4technicalreport, touvron2023llama, gemmateam2025gemma3technicalreport}. The scaling of these models faces a critical bottleneck: public datasets are projected to be exhausted between 2026 and 2032 \cite{DBLP:journals/corr/abs-2001-08361, 10.5555/3692070.3694094}. Consequently, the focus of fine-tuning is shifting toward leveraging massive amounts of data generated on edge devices \cite{gunter2024appleintelligencefoundationlanguage}. However, directly collecting such data for centralized fine-tuning is often undesirable due to ownership, deployment, and compliance constraints, especially in sensitive application domains such as finance \cite{wu2023bloomberggptlargelanguagemodel} and healthcare \cite{Thirunavukarasu2023}. This makes edge-native fine-tuning an important systems direction for practical LLM customization.

Federated learning (FL) provides a natural way to leverage decentralized data without transferring raw samples \cite{9579038}. However, directly applying conventional FL to LLM fine-tuning is highly challenging on resource-constrained edge devices \cite{10944288,bai2024federated}. In conventional FL, each client must maintain and train a full model locally, and then transmit full-model updates to the server for aggregation, as illustrated in Fig.~\ref{intro}(a). For LLM fine-tuning, this workflow creates three major bottlenecks in edge systems: \textit{memory}, \textit{computation}, and \textit{communication}. From the \textit{memory perspective}, local full-model fine-tuning is often infeasible. FP16 training requires roughly 16 GB of RAM per 1B parameters, and the requirement doubles for FP32 training \cite{McKeag2025}. In contrast, practical edge devices offer far less memory---for example, NVIDIA Jetson Orin Nano developer boards provide only 4 GB RAM, and commercial smartphones such as the Huawei nova 9 Pro provide only 8 GB RAM. Running and fine-tuning a full LLM on such devices can therefore easily lead to out-of-memory failures. From the \textit{computation perspective}, edge devices typically lack the processing capability needed for efficient full-model training. Even when memory is sufficient, updating all parameters locally can take substantial time, making conventional FL difficult to sustain in latency-sensitive edge environments. From the \textit{communication perspective}, conventional FL requires clients to repeatedly upload full-model updates for aggregation. When the model is large, this process consumes substantial wireless bandwidth and becomes a major systems bottleneck.

\begin{figure*}[t]
\centering
\includegraphics[width=\textwidth]{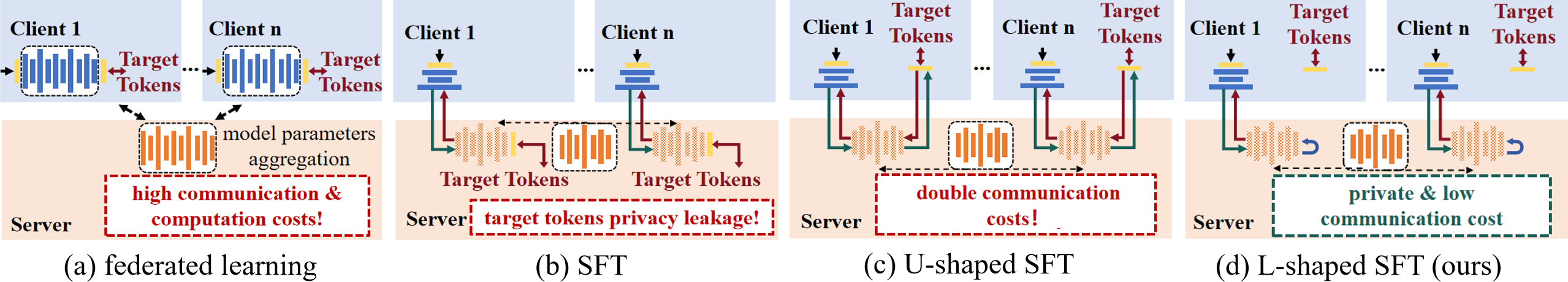}
\vspace{-6mm}
\caption{Evolution of split federated fine-tuning.}
\vspace{-4mm}
\label{intro}
\end{figure*}

These limitations motivate split federated fine-tuning (SFT) \cite{thapa2022splitfed, 11433073}, which partitions the model into a lightweight client-side submodel and a larger server-side submodel. As illustrated in Fig.~\ref{intro}(b), the client executes only the shallow prefix of the model and uploads the resulting cut-layer activations to the server, which performs the remaining forward computation, computes the loss, and returns the corresponding gradients for client-side backpropagation. After several local fine-tuning steps, the client-side submodels are periodically sent to the server and aggregated. By offloading most model parameters and heavy computation to the server, SFT substantially reduces the client-side memory footprint, local computational burden, and the amount of model parameters that must be transmitted during aggregation, making collaborative LLM fine-tuning more feasible for resource-constrained edge devices.
However, this standard SFT design faces a fundamental limitation in autoregressive LLM fine-tuning: the target tokens must be made available at the server for loss computation. To keep target tokens private on device, recent work adopts U-shaped SFT \cite{zhang2025edge,yuan2025flexiblepersonalizedsplitfederated,yang2025fedmobillmefficientfederated}, which moves the prediction head and loss computation back to the client, as illustrated in Fig.~\ref{intro}(c).
Howerver, the U-shaped SFT design faces two main system challenges: \textit{(1) U-shaped SFT doubles communication costs.} Clients must receive the server-side output activations for local loss evaluation and then send the resulting gradients back to the server during backpropagation. Consequently, training becomes tightly coupled to step-by-step bidirectional interaction between the client and the server, which substantially increases communication overhead. \textit{(2) Both standard SFT and U-shaped SFT require the client to remain continuously involved throughout the training process.} In standard SFT, the client must repeatedly participate in split forward and backward propagation to provide activations and receive gradients, while in U-shaped SFT it must further wait for server-side activations and send gradients back for local loss computation. Such tight synchronization is impractical in mobile environments with unstable bandwidth and intermittent connectivity.

Motivated by these challenges, we aim to answer the following research questions:

\textit{(RQ1): How can we redesign the U-shaped SFT pipeline to reduce per-step communication overhead without sacrificing fine-tuning effectiveness?}

\textit{(RQ2): How can we relax the need for continuous client--server synchronization, thereby reducing client online time and improving the practicality of split LLM fine-tuning in unstable edge environments?}

%To keep target tokens on device, recent work adopts U-shaped SFT \cite{zhang2025edge,yuan2025flexiblepersonalizedsplitfederated,yang2025fedmobillmefficientfederated}, which places the prediction head and loss computation back on the client. As shown in Fig.\ref{intro}(c), while this preserves label locality, it introduces a structurally communication redundancy compared with standard SFT: clients must receive the output activtions of the server-side model for local loss evaluation, and then send the calculated gradients back to the server again during backpropagation. As a result, training becomes tightly coupled to step-by-step bidirectional interaction between the clients

In this paper, we present \textbf{L-shaped SFT}, a split fine-tuning framework that removes the bidirectional bottleneck of U-shaped SFT. As shown in Fig.~\ref{intro}(d), the key idea is to redesign where supervision happens in the split pipeline. Instead of returning server-side outputs to the client for loss evaluation, L-shaped SFT exploits weight tying in modern LLMs to supervise server-side hidden activations directly using target embeddings. This moves loss computation to the server side and eliminates the activation-return / gradient-upload loop required by U-shaped designs, thereby substantially reducing per-step communication across the split interface \textit{(answer RQ1)}.

Although our L-shaped SFT effectively reduces communication overhead while keeping target tokens local, it still requires clients to remain continuously involved in the training process. To remove this remaining bottleneck, we further introduce \textbf{one-shot L-shaped SFT} (hereafter referred to as \textbf{one-shot SFT}), an operating mode designed for intermittently connected edge devices. In one-shot SFT, each client uploads its activations only once and then leaves the training loop, while the server continues optimization over cached representations. This design is a natural extension of L-shaped SFT, whose training pipeline eliminates the activation-return/gradient-upload loop and makes server-side training feasible without ongoing client participation. One-shot SFT reduces the communication frequency to a single upload window and minimizes client online time, which makes the fine-tuning process more practical for unstable and resource-constrainted mobile environments \textit{(answer RQ2)}.

Our main contributions are summarized as follows:
\begin{itemize}
\item \textbf{A split-training framework that removes the bidirectional bottleneck.} We present L-shaped SFT, a split federated fine-tuning design that moves supervision from the client to the server and eliminates the activation-return / gradient-upload loop required by U-shaped SFT. This redesign reduces per-step communication across the split interface and weakens the dependence on continuous client participation.

\item \textbf{A one-shot operating mode for unstable edge environments.} We show that the unidirectional structure of L-shaped SFT naturally enables one-shot SFT, where clients upload activations once and then go offline while the server continues optimization over cached representations. This substantially reduces communication frequency and client online time, making split LLM fine-tuning more practical under intermittent connectivity and limited device resources.

\item \textbf{A real-system implementation on heterogeneous edge hardware.} We implement the proposed framework in a real testbed with commercial smartphones, NVIDIA developer boards, and a GPU server. Experiments on WikiText-2 and MMLU show that our design achieves substantial communication savings while maintaining competitive fine-tuning quality in practical deployment settings.
\end{itemize}

\section{Background and Related Work}
\label{sec:related}

\textbf{Split Federated Learning for LLMs.} Split learning and split federated learning have been widely explored as a way to reduce client-side computation and memory by offloading deeper layers to a server. In the context of LLM adaptation, recent efforts such as SplitLoRA~\cite{lin2024splitlora} and SflLLM~\cite{11045165} adopt standard SFT designs for LLM fine-tuning to reduce the amount of trainable state maintained on resource-constrained devices. These studies show that split execution is a promising systems direction for edge LLM fine-tuning, especially when full model is infeasible on devices. However, standard SFT designs implicitly assume that supervision can be computed at the server. In autoregressive LLM fine-tuning, this assumption is problematic because the target tokens may contain sensitive user content and are therefore undesirable to expose outside the client.
\
\\
\noindent \textbf{U-shaped SFT.} To preserve label locality in autoregressive tasks, recent work moves the prediction head and loss computation back to the client, resulting in U-shaped SFT pipelines. Representative examples include EUSFL~\cite{zhang2025edge}, MPSL \cite{fudala2025fine}, FlexP-SFL~\cite{yuan2025flexiblepersonalizedsplitfederated}, and MobiLLM~\cite{yang2025fedmobillmefficientfederated}. These designs ensure that target tokens remain on device, but they also require the server-side outputs to be returned to the client for local loss evaluation, followed by a client-to-server gradient upload during backpropagation.
\
\\
\noindent \textbf{Communication Bottleneck in U-shaped SFT.} Concretely, in U-shaped SFT, the client first uploads cut-layer activations to the server, the server executes the deeper layers, and the resulting server-side activations are then returned to the client for local loss computation. During backpropagation, gradients must be sent back to the server before the server can continue updating its own layers.

\begin{figure}[t]
\centering
\includegraphics[width=0.46\textwidth]{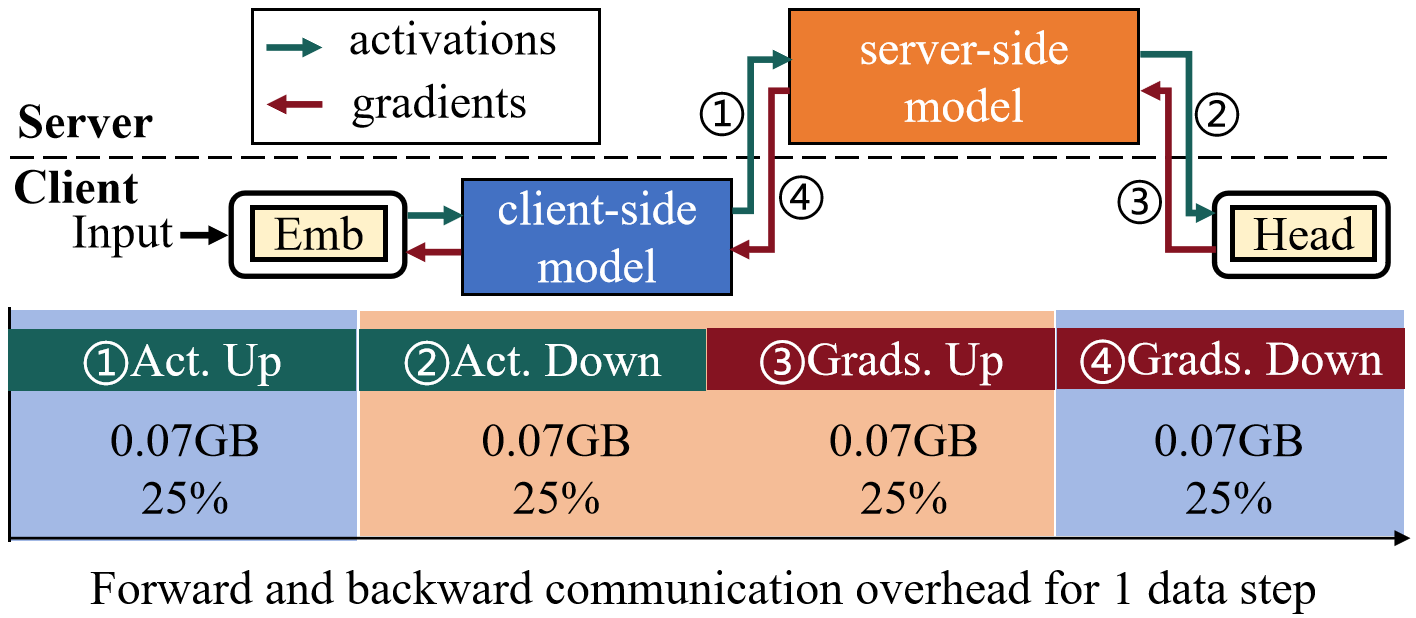}
\vspace{-2mm}
\caption{\textbf{Communication bottleneck analysis of U-shaped SFT.} Unlike standard SFT, U-shaped architectures necessitate a \textbf{bidirectional gradient loop}. (Experimental Setup: Qwen2.5-0.5B on MMLU, sequence length: 128, batch size: 8).}
\label{fig:comm_comparison}
\vspace{-5mm}
\end{figure}

To quantify the systems overhead of U-shaped SFT, we empirically measure its communication cost for a single training step. As shown in Fig.~\ref{fig:comm_comparison}, under Qwen2.5-0.5B on MMLU with sequence length 128 and batch size 8, each step involves four cross-interface transmissions:
\begin{enumerate}
    \item \textbf{activation upload:} Client $\rightarrow$ Server,
    \item \textbf{activation download:} Server $\rightarrow$ Client,
    \item \textbf{gradient upload:} Client $\rightarrow$ Server,
    \item \textbf{gradient download:} Server $\rightarrow$ Client.
\end{enumerate}

By contrast, standard SFT only requires activation upload (step 1) and gradient download (step 4). The additional activation download (step 2) and gradient upload (step 3) path in U-shaped SFT therefore introduces a mandatory bidirectional loop, which substantially increases the communication overhead of each training step.

More importantly, because loss computation remains on the client, the client and server must stay tightly synchronized throughout fine-tuning. In practice, this means that the client must remain continuously online, wait for server responses, and sustain communication with the server at every training step. Such step-by-step bidirectional coupling is undesirable for bandwidth-constrained and intermittently connected edge devices, where connectivity and availability may fluctuate significantly during fine-tuning.
\
\\
\noindent \textbf{Positioning of our work.}
Our work targets a systems bottleneck that remains unresolved in prior SFT designs for autoregressive LLMs. Standard SFT places loss computation on the server, which requires exposing target tokens outside the client. U-shaped SFT avoids this by moving the prediction head and loss computation back to the client, but introduces a bidirectional dependency across the split interface in every training step. This dependency increases per-step communication and requires continuous client participation. Our goal is to improve the practicality of standard SFT and U-shaped SFT under bandwidth-constrained and intermittently connected edge environments while keeping target tokens locally. We redesign the split pipeline, which yields a unidirectional training protocol that reduces communication overhead at each step and naturally enables a one-shot operating mode that minimizes client online time.

\begin{figure}[t]
\centering
\includegraphics[width=0.46\textwidth]{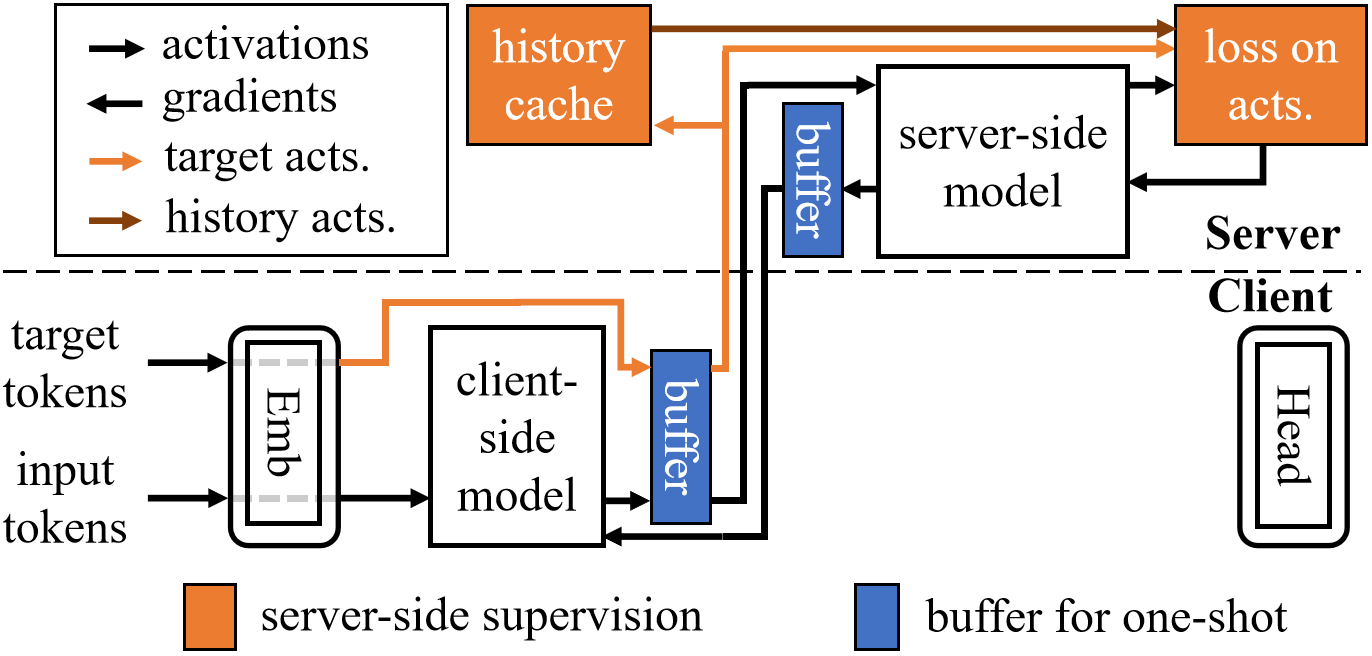}
\vspace{-2mm}
\caption{Workflow of L-shaped SFT.}
\label{system}
\vspace{-5mm}
\end{figure}

\section{System Design}
\label{sec:methodology}

This section presents the design of L-shaped SFT\footnote{In this work, we focus on SFL-v2\cite{thapa2022splitfed}. During split execution, we assume that the server directly accesses only the server-side submodel parameters, while client-side components such as the embedding layer and the output head remain local and are not directly exposed to the server. This is consistent with the standard split learning settings~\cite{thapa2022splitfed}.}. Our goal is to remove the step-by-step bidirectional dependency in U-shaped SFT while keeping target tokens local to the client. To this end, we redesign where supervision is imposed in the split pipeline, so that the server can complete fine-tuning without returning intermediate outputs to the client at every step. The resulting design has two key properties: it reduces per-step communication across the split interface and naturally enables a one-shot operating mode for intermittently connected edge devices.

\subsection{Overview and Setup}
\label{subsec:setup}

We consider split federated fine-tuning for autoregressive LLMs, where the model is partitioned into a client-side submodel $f_c(\cdot;\theta_c)$ and a server-side submodel $f_s(\cdot;\theta_s)$. For an input sequence $x$ with target tokens $y$, the client first computes the cut-layer activation
$
h_c = f_c(x;\theta_c),
$
and uploads it to the server. The server then continues the forward pass and produces
$
h_s = f_s(h_c;\theta_s).
$

In conventional U-shaped SFT, loss computation remains on the client. As a result, the server-side output must be returned to the client for local loss evaluation, and the corresponding gradients must then be sent back to the server during backpropagation. This creates a mandatory bidirectional interaction at every fine-tuning step. In contrast, L-shaped SFT moves supervision to the server side, so that training can proceed without returning server-side outputs to the client. Figure~\ref{system} illustrates the overall workflow.
%\footnote{Within each communication round, participating clients perform multiple step-level split-training updates with the server. At the end of each communication round, the client-side submodels are aggregated to form the updated global client-side model.}.

Throughout this section, we use $e_y=\mathrm{Embed}(y)$ to denote the target embedding of $y$. This target embedding is computed locally on the client and serves as the supervisory signal for server-side fine-tuning.

\begin{figure}[t]
\centering
\includegraphics[width=0.46\textwidth]{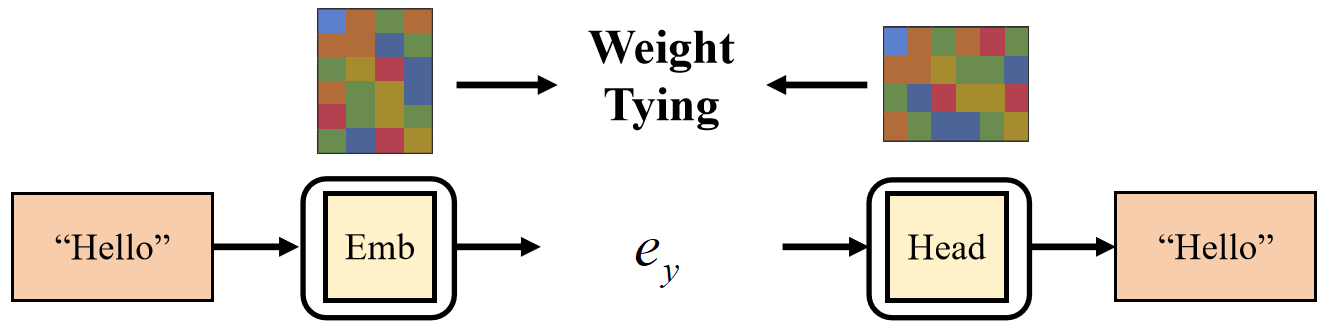}
\vspace{-4mm}
\caption{Weight Tying between the embedding layer and the head layer.}
\label{weight_tying}
\vspace{-5mm}
\end{figure}

\subsection{Server-Side Supervision with Target Embeddings}
\label{subsec:server_supervision}

The main obstacle in redesigning U-shaped SFT is that, for autoregressive fine-tuning, the target tokens are needed at the site of loss computation. Existing U-shaped designs address this by keeping the prediction head and loss computation on the client, but this choice also induces the bidirectional communication loop discussed in Section~\ref{sec:related}. Our key idea is to shift the loss calculation from client side to server side.

This redesign leverages the weight-tying property in modern LLMs \cite{NIPS2017_3f5ee243,press-wolf-2017-using}.\footnote{Our method also applies to models with an untied head, where the client can instead provide only the required head vector. This does not explicitly reveal the target token unless the vector can be matched to a known vocabulary-indexed output matrix.} Since output head and input embedding are typically large, weight tying is widely used to reduce model storage and memory, which is particularly suitable for mobile devices. As shown in Fig.~\ref{weight_tying}, the output head is typically tied to the input embedding matrix, i.e.,
\begin{equation}
W_{\text{head}} = E^\top.
\end{equation}
Although $E^\top E$ is not an identity matrix in a strict algebraic sense, pre-trained embedding spaces exhibit strong semantic consistency: if a hidden state is well aligned with the embedding of the correct token, the tied output head tends to assign that token the highest score.

We empirically verify this property by testing the "self-recovery" rate of embeddings across standard LLMs by projecting embeddings back through the transposed embedding matrix. As shown in Table~\ref{tab:semantic_consistency}, the embedding self-recovery rate is consistently high across representative LLM families. For GPT-2 and Llama-3.2, the Top-1 recovery rate exceeds $99.5\%$, while Qwen2.5 still achieves a Top-1 recovery rate above $94\%$ despite its substantially larger vocabulary. These results support the use of target embeddings as effective supervisory anchors.

Based on this observation, L-shaped SFT replaces client-side token-level supervision with server-side embedding-level supervision. Instead of requiring the client to receive server outputs and compute the loss locally, the client only uploads the target embedding
\begin{equation}
e_y = \mathrm{Embed}(y),
\end{equation}
together with the cut-layer activation $h_c$. The server then compares its predicted hidden state $h_s$ against $e_y$ directly. This redesign removes the activation-return / gradient-upload loop required by U-shaped SFT, thereby reducing per-step communication while avoiding the need to expose raw target token IDs outside the client.

\begin{table}[]
\centering
\caption{\textbf{Empirical Validation of Semantic Consistency.} We evaluate the "self-recovery" accuracy ($y' = \text{argmax}(e_y E^T)$) across diverse model families. The high recovery rates verify that target embeddings serve as precise supervision labels.}
\vspace{-2mm}
\label{tab:semantic_consistency}
\resizebox{0.9\linewidth}{!}{
\begin{tabular}{lccc}
\toprule
\textbf{Model} & \textbf{Vocab Size} & \textbf{Top-1 (\%)} & \textbf{Top-5 (\%)} \\
\midrule
GPT-2 & 50,257 & 99.66 & 99.88 \\
Llama-3.2-1B-Instruct & 128,256 & 99.52 & 99.58 \\
Qwen2.5-0.5B & 151,699 & 94.34 & 94.88 \\
\bottomrule
\end{tabular}
}
\vspace{-6mm}
\end{table}

\subsection{Activation-Space Objective}
\label{subsec:aloss_design}

Once supervision is moved to the server side, the loss can no longer be computed through the standard prediction head on the client. We therefore define the training objective directly in the activation space.

Specifically, we reformulate next-token supervision as a contrastive objective \cite{9578660} between the server-side output activation and the target embedding. Let $\mathcal{Q}$ denote a history cache maintained on the server, which stores target embeddings from previous steps or other clients and serves as a lightweight proxy for negative samples. For a token position $t$, we define the activation-based loss as
\begin{equation}
\begin{split}
\mathcal{L}_{\text{A-Loss}}
&=
-\log
\frac{\exp(s_{y}/\tau)}
{\exp(s_{y}/\tau)+\sum_{e_k\in\mathcal{Q}}\exp(s_{k}/\tau)}, \\
\text{where}\quad
s_{k}
&=
\mathrm{sim}(h_{s}, e_k).
\end{split}
\label{eq:aloss}
\end{equation}
Here, $\tau$ is a temperature hyperparameter, and $\mathrm{sim}(\cdot,\cdot)$ is a similarity function. In practice, the dot product is a natural choice because it matches the form of the unnormalized logits induced by a tied output head, while cosine similarity can also be used when additional normalization is preferred.

Intuitively, $\mathcal{L}_{\text{A-Loss}}$ pulls the predicted hidden state toward the target embedding and pushes it away from other embeddings in the history cache. This allows the server to impose token-level supervision without reconstructing the full vocabulary projection on the client side and without requiring bidirectional interaction across the split interface.

The relationship between $\mathcal{L}_{\text{A-Loss}}$ and standard cross-entropy is analyzed in Section~4. In particular, Section~4 shows that under the weight-tying condition and an appropriate cache setting, the proposed loss recovers the same gradient signal at the hidden-state interface.

Algorithm~\ref{alg:lshaped_sft} summarizes the training workflow of L-shaped SFT. In each step, the client uploads the cut-layer activation together with the target embedding, and the server performs both forward supervision and loss computation in the activation space. Unlike U-shaped SFT, the server-side output does not need to be returned to the client for local head computation. The only backward information sent across the split interface is the cut-layer gradient, which is used by the client to update its local submodel.

\begin{algorithm}[t]
\caption{L-shaped SFT}
\label{alg:lshaped_sft}
\begin{algorithmic}[1]
\Require Client set $\mathcal{C}=\{1,\dots,N\}$; local dataset $\mathcal{D}_i$ for each client $i$; initial global client-side parameters $\bar{\theta}_c^{\,0}$; server-side parameters $\theta_s$; client-side submodel $f_c(\cdot;\theta_c)$; server-side submodel $f_s(\cdot;\theta_s)$; embedding function $\mathrm{Embed}(\cdot)$; server-side history cache $\mathcal{Q}$; communication rounds $R$; local split-training steps per round $T$
\Ensure Aggregated client-side parameters $\bar{\theta}_c^{\,R}$ and updated server-side parameters $\theta_s$

\For{$r = 1$ to $R$}
    \ForAll{participating client $i \in \mathcal{C}$ \textbf{in parallel}}
        \State Initialize local client-side model: $\theta_{c,i}^{\,r,0} \gets \bar{\theta}_c^{\,r-1}$
    \EndFor

    \For{$t = 1$ to $T$}
        \ForAll{participating client $i \in \mathcal{C}$ \textbf{in parallel}}
            \State Sample a mini-batch $(x_i^t, y_i^t)$ from $\mathcal{D}_i$
            \State $h_{c,i}^t \gets f_c(x_i^t;\theta_{c,i}^{\,r,t-1})$
            \State $e_{y,i}^t \gets \mathrm{Embed}(y_i^t)$
            \State Upload $(h_{c,i}^t, e_{y,i}^t)$ to the server
        \EndFor

        \ForAll{received batch $(h_{c,i}^t, e_{y,i}^t)$}
            \State $h_{s,i}^t \gets f_s(h_{c,i}^t;\theta_s)$
            \State Compute $\mathcal{L}_{\text{A-Loss}}(h_{s,i}^t, e_{y,i}^t; \mathcal{Q})$
            \State Backpropagate on the server and update $\theta_s$
            \State Obtain cut-layer gradient $\nabla h_{c,i}^t = \partial \mathcal{L}/\partial h_{c,i}^t$
            \State Update $\mathcal{Q}$ with $e_{y,i}^t$
            \State Send $\nabla h_{c,i}^t$ back to client $i$
        \EndFor

        \ForAll{participating client $i \in \mathcal{C}$ \textbf{in parallel}}
            \State Backpropagate through $f_c(\cdot;\theta_{c,i}^{\,r,t-1})$ using $\nabla h_{c,i}^t$
            \State Update local client-side parameters to $\theta_{c,i}^{\,r,t}$
        \EndFor
    \EndFor

    \State Aggregate client-side models:
    $$
    \bar{\theta}_c^{\,r} \gets \sum_{i \in \mathcal{C}} p_i \theta_{c,i}^{\,r,T},
    $$
    \Statex \hspace{\algorithmicindent} where $p_i$ is the aggregation weight of client $i$
\EndFor

\State \Return $\bar{\theta}_c^{\,R}, \theta_s$
\end{algorithmic}
\end{algorithm}

\vspace{-2mm}
\subsection{One-shot SFT}
\label{subsec:oneshot_design}

The unidirectional structure of L-shaped SFT naturally enables a more aggressive operating mode, which we call \textit{one-shot SFT}. The goal of one-shot SFT is to minimize client online time in unstable edge environments by decoupling data provision from server-side optimization. Algorithm~\ref{alg:oneshot_sft} summarizes the overall workflow.

\textbf{1) Single local forward pass.}
Each client performs a single forward pass over its local dataset. For each training sample $(x,y)$, the client computes the cut-layer activation
$
h_c = f_c(x;\theta_c),
$
and the target embedding
$
e_y = \mathrm{Embed}(y).
$
These pairs $(h_c,e_y)$ are stored in a local buffer.

\textbf{2) One-time transmission.}
After processing its local dataset once, the client uploads the buffered activation--embedding pairs to the server in a single transmission. Unlike step-synchronous U-shaped SFT, this process does not require the client to wait for server responses or remain continuously reachable during later optimization.

\textbf{3) Server-side caching and offline training.}
The server caches all uploaded pairs as a training buffer. It then iterates over this cached buffer for multiple epochs, computes
$
h_s = f_s(h_c;\theta_s),
$
evaluates $\mathcal{L}_{\text{A-Loss}}(h_s,e_y)$, and updates the server-side parameters by backpropagation. After the initial upload, no further communication with the clients is required, and the clients can immediately go offline or release their local resources.

Compared with step-synchronous split training, one-shot SFT changes the role of the edge client from an always-online participant to a transient data provider. The communication pattern is reduced from repeated step-level interaction to a single upload window, while the optimization itself is completed entirely on cached server-side representations. This design is particularly suitable for bandwidth-constrained and intermittently connected edge devices, where maintaining long-lived synchronization with the server is costly or unreliable.

\vspace{-2mm}
\section{Theoretical Analysis}
\label{sec:analysis}

The purpose of this section is to show that the proposed A-Loss is not an ad hoc training objective, but a principled surrogate of the standard cross-entropy loss for autoregressive fine-tuning. Our analysis focuses on the gradient returned by the server-side objective, since this gradient directly determines the optimization signal received by both the server-side submodel and the client-side submodel through the split interface. Relative to prior convergence analyses~\cite{11049940,NEURIPS2024_bb63841e}, the main novelty of our method lies in replacing the standard cross-entropy objective with the proposed A-Loss on the server side. Accordingly, the goal of this section is not to re-derive a full convergence bound, but to establish the consistency between A-Loss and the standard cross-entropy loss. In particular, if A-Loss induces the same gradient as cross-entropy at the server-side output, and hence at the split interface, then under the same optimization assumptions as in prior analyses, our method inherits the same convergence characterization.

Specifically, we establish two results. First, under ideal conditions, A-Loss yields exactly the same gradient with respect to the server-side hidden state as the standard cross-entropy objective. This implies that the proposed server-side supervision preserves the training signal of conventional autoregressive fine-tuning. Second, when the history cache is finite, A-Loss becomes a cache-restricted approximation to full-vocabulary cross-entropy, and the resulting gradient deviation can be explicitly characterized. Together, these results justify why A-Loss can serve as a practical replacement for client-side loss computation in L-shaped SFT.

\begin{algorithm}[t]
\caption{One-shot SFT}
\label{alg:oneshot_sft}
\begin{algorithmic}[1]
\Require Client set $\mathcal{C}=\{1,\dots,N\}$; local dataset $\mathcal{D}_i$ for each client $i$; client-side submodel $f_c(\cdot;\theta_c)$; server-side submodel $f_s(\cdot;\theta_s)$; embedding function $\mathrm{Embed}(\cdot)$; server-side training epochs $E$
\Ensure Updated server-side parameters $\theta_s$

\State $\mathcal{B} \gets \emptyset$ \Comment{Global buffer stored on the server}

\ForAll{client $i \in \mathcal{C}$ \textbf{in parallel}}
    \State $\mathcal{B}_i \gets \emptyset$  \Comment{Local buffer on clients}
    \ForAll{$(x,y) \in \mathcal{D}_i$}
        \State $h_c \gets f_c(x;\theta_c)$
        \State $e_y \gets \mathrm{Embed}(y)$
        \State $\mathcal{B}_i \gets \mathcal{B}_i \cup \{(h_c,e_y)\}$
    \EndFor
    \State Upload $\mathcal{B}_i$ to the server once
\EndFor

\State $\mathcal{Q} \gets \mathcal{B} \gets \bigcup_{i=1}^{N} \mathcal{B}_i$ \Comment{Aggregate local buffers and update history cache}
\State Clients go offline

\For{$epoch = 1$ to $E$}
    \ForAll{$(h_c,e_y) \in \mathcal{B}$}
        \State $h_s \gets f_s(h_c;\theta_s)$
        \State Compute $\mathcal{L}_{\text{A-Loss}}(h_s,e_y)$
        \State Update $\theta_s$ by backpropagation
    \EndFor
\EndFor

\State \Return $\theta_s$
\end{algorithmic}
\end{algorithm}

\vspace{-2mm}
\subsection{Gradient Consistency of A-Loss}
\label{subsec:grad_consistency}

We begin by comparing the gradient induced by A-Loss with that of the standard cross-entropy loss. Let $h_s$ denote the server-side hidden state produced by the server-side submodel. In standard autoregressive fine-tuning, the next-token distribution is computed by a linear output head $W_{\text{head}} \in \mathbb{R}^{V \times d}$ followed by Softmax, where $V$ is the vocabulary size and $d$ is the hidden dimension. Let $\mathbf{w}_k$ denote the embedding vector associated with token $k$ in the output head, and let $y$ denote the ground-truth token. Then the cross-entropy loss induces the following gradient with respect to $h_s$:
\begin{equation}
\label{eq:ce_grad_compact}
\nabla_{h_s}\mathcal{L}_{\mathrm{CE}}
=
\sum_{k=1}^{V} p_k \mathbf{w}_k - \mathbf{w}_y,
\end{equation}
where $p_k$ is the Softmax probability assigned to token $k$.

For the proposed A-Loss, the server does not predict logits over the full vocabulary. Instead, it compares the hidden state $h_s$ with the target embedding and a set of cached negative embeddings. Let $\mathcal{Q}$ denote the history cache, let $\mathbf{e}_k$ be the embedding of token $k$, and let $q_k$ be the cache-normalized contrastive probability under A-Loss. Using dot-product similarity, the gradient of A-Loss with respect to $h_s$ can be written as
\begin{equation}
\label{eq:aloss_grad_compact}
\nabla_{h_s}\mathcal{L}_{\mathrm{A\mbox{-}Loss}}
=
\frac{1}{\tau}
\left(
\sum_{k\in\mathcal{Q}} q_k \mathbf{e}_k - \mathbf{e}_y
\right),
\end{equation}
where $\tau$ is the temperature coefficient.

The key question is whether this gradient is consistent with the gradient produced by standard cross-entropy. The following theorem gives the answer.

\begin{theorem}
\label{thm:grad_equiv_rewrite}
If the following three conditions hold: (i) the temperature is set to $\tau=1$; (ii) the history cache covers the full vocabulary, i.e., $\mathcal{Q}=\{1,\dots,V\}$; and (iii) weight tying holds so that $\mathbf{e}_k=\mathbf{w}_k$ for all tokens $k$, then
\begin{equation}
\label{eq:grad_equiv_main}
\nabla_{h_s}\mathcal{L}_{\mathrm{A\mbox{-}Loss}}
=
\nabla_{h_s}\mathcal{L}_{\mathrm{CE}}.
\end{equation}
Consequently, the two objectives also induce the same gradient at the split interface.
\end{theorem}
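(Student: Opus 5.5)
The plan is to compute both gradients explicitly and show, by substituting the three conditions in turn, that they coincide term by term.

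\textbf{Step 1: the cross-entropy gradient.} With the linear tied head, the logits are $z_k=\mathbf{w}_k^\top h_s$ and the probabilities are $p_k=\exp(z_k)/\sum_{j=1}^V\exp(z_j)$. Differentiating $\mathcal{L}_{\mathrm{CE}}=-z_y+\log\sum_j \exp(z_j)$ with the chain rule, using $\nabla_{h_s}z_k=\mathbf{w}_k$, gives \eqref{eq:ce_grad_compact}.

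\textbf{Step 2: the A-Loss gradient.} With dot-product similarity, $s_k=\mathbf{e}_k^\top h_s$, so
\[
\mathcal{L}_{\mathrm{A\mbox{-}Loss}}=-\frac{s_y}{\tau}+\log\Bigl(\exp(s_y/\tau)+\sum_{k\in\mathcal{Q}}\exp(s_k/\tau)\Bigr).
\]
The same chain-rule computation yields \eqref{eq:aloss_grad_compact}, where $q_k$ is the softmax of $s_k/\tau$ over the candidate set.

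\textbf{Step 3: the delicate point.} In \eqref{eq:aloss} the positive term appears separately from the cache sum. If $\mathcal{Q}$ literally contains $y$ as well, the target would be counted twice in the denominator. I will therefore interpret condition (ii) as saying that the full candidate set, the positive together with the negatives, is exactly $\{1,\dots,V\}$, with each token appearing once. This is also the convention under which \eqref{eq:aloss_grad_compact} is written as a single sum over $\mathcal{Q}$ with weights $q_k$. I will state this reading explicitly, since it is the main obstacle to an exact equality. Under the literal double-counting reading, the denominator would gain an extra $\exp(s_y)$, and the identity would fail by a term proportional to $\mathbf{e}_y$.

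\textbf{Step 4: substituting the conditions.} Under this reading, set $\tau=1$ by (i) and $\mathcal{Q}=\{1,\dots,V\}$ by (ii). Then
\[
q_k=\frac{\exp(\mathbf{e}_k^\top h_s)}{\sum_{j=1}^V\exp(\mathbf{e}_j^\top h_s)}.
\]
By (iii), $\mathbf{e}_k=\mathbf{w}_k$, so $q_k=p_k$ for every $k$, and the two gradient expressions coincide.

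\textbf{Step 5: the split interface.} Both losses depend on $\theta_s$ and $h_c$ only through $h_s=f_s(h_c;\theta_s)$. The chain rule then gives
\[
\nabla_{h_c}\mathcal{L}=J_{f_s}(h_c)^\top\,\nabla_{h_s}\mathcal{L},
\]
and likewise for $\nabla_{\theta_s}\mathcal{L}$, with the same Jacobian for both losses. The equality at $h_s$ therefore propagates to the cut layer and to the server parameters. This establishes the final claim of Theorem~\ref{thm:grad_equiv_rewrite}.
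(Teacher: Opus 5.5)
Your proof is correct and follows the paper's argument, with more detail: you derive \eqref{eq:ce_grad_compact} and \eqref{eq:aloss_grad_compact}, which the paper simply states, use (i)--(iii) to get $q_k=p_k$ and $\mathbf{e}_y=\mathbf{w}_y$, and carry the equality through the Jacobian as in \eqref{eq:split_grad_compact}; your reading of (ii) as ``positive plus cached negatives is exactly $\{1,\dots,V\}$, each token once'' is the convention the paper silently adopts when it writes \eqref{eq:aloss_grad_compact} as a single sum over $\mathcal{Q}$ and assumes $y\in\mathcal{Q}$ in Section~\ref{subsec:finite_cache}. One correction to your aside: under the literal double-counting reading of \eqref{eq:aloss} one gets $\nabla_{h_s}\mathcal{L}_{\mathrm{A\mbox{-}Loss}}=\nabla_{h_s}\mathcal{L}_{\mathrm{CE}}/(1+p_y)$, so the discrepancy is a rescaling of the CE gradient, proportional to $\sum_k p_k\mathbf{e}_k-\mathbf{e}_y$, rather than a term proportional to $\mathbf{e}_y$ alone.
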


\noindent\textbf{Proof intuition.}
Under weight tying, the similarity score used by A-Loss is computed against the same token embeddings as the standard output head. When $\tau=1$ and the cache covers the full vocabulary, the normalization term of A-Loss becomes identical to the Softmax partition function. As a result, the cache-normalized probability $q_k$ reduces to the standard Softmax probability $p_k$, and the target embedding $\mathbf{e}_y$ coincides with the output-head vector $\mathbf{w}_y$. Substituting these relations into Eq.~\eqref{eq:aloss_grad_compact} immediately recovers Eq.~\eqref{eq:ce_grad_compact}, which proves Eq.~\eqref{eq:grad_equiv_main}.

Theorem~\ref{thm:grad_equiv_rewrite} shows that A-Loss preserves the same first-order optimization signal as standard cross-entropy, even though the loss is no longer computed through a client-side prediction head. In other words, it can be exactly equivalent to conventional token-level supervision from the viewpoint of gradient-based optimization.

This equivalence also extends naturally to split training. Since the gradient returned to the client is obtained by backpropagating through the server-side submodel, the cut-layer gradient is
\begin{equation}
\label{eq:split_grad_compact}
\nabla_{h_c}\mathcal{L}
=
\left(
\frac{\partial h_s}{\partial h_c}
\right)^\top
\nabla_{h_s}\mathcal{L}.
\end{equation}
Therefore, once Eq.~\eqref{eq:grad_equiv_main} holds at the server-side hidden state, the same equality also holds for the split-interface gradient returned to the client. This explains why L-shaped SFT can replace the U-shaped training loop without changing the underlying training signal in the ideal case.

\subsection{Approximation Error under Finite Cache}
\label{subsec:finite_cache}

In practice, the server-side history cache only contains a subset of vocabulary items, rather than the entire vocabulary. In this case, A-Loss should be interpreted as a cache-restricted approximation to the standard cross-entropy objective. The exact equivalence in Theorem~\ref{thm:grad_equiv_rewrite} no longer holds, but the deviation can be characterized in a compact form.

Assume that $\tau=1$, weight tying still holds, and the positive token $y$ is included in the cache $\mathcal{Q}$. We define the gradient gap at the server-side hidden state as
\begin{equation}
\label{eq:finite_cache_gap_compact}
\Delta_{h_s}
\triangleq
\nabla_{h_s}\mathcal{L}_{\mathrm{A\mbox{-}Loss}}
-
\nabla_{h_s}\mathcal{L}_{\mathrm{CE}}.
\end{equation}
By comparing Eq.~\eqref{eq:ce_grad_compact} and Eq.~\eqref{eq:aloss_grad_compact}, this gap can be decomposed as
\begin{equation}
\label{eq:finite_cache_gap_decomp}
\Delta_{h_s}
=
\sum_{k\in\mathcal{Q}} (q_k-p_k)\mathbf{e}_k
-
\sum_{k\notin\mathcal{Q}} p_k \mathbf{e}_k.
\end{equation}

Equation~\eqref{eq:finite_cache_gap_decomp} reveals two distinct sources of approximation error. The first term,
$
\sum_{k\in\mathcal{Q}} (q_k-p_k)\mathbf{e}_k,
$
is a \emph{renormalization error}: it arises because A-Loss normalizes probabilities only over cached tokens, whereas cross-entropy normalizes over the entire vocabulary. The second term,
$
-\sum_{k\notin\mathcal{Q}} p_k \mathbf{e}_k,
$
is a \emph{missing-negative error}: it captures the contribution of vocabulary items that are absent from the cache but still carry non-negligible probability mass under the full Softmax.

This decomposition provides a clear interpretation of when A-Loss works well in practice. If the cache contains the dominant competing tokens, then the omitted probability mass outside $\mathcal{Q}$ is small, and the renormalized cache distribution $q_k$ remains close to the full Softmax distribution $p_k$. In this case, the gradient gap $\Delta_{h_s}$ becomes small, and A-Loss provides a good approximation to standard cross-entropy. In the limiting case where $\mathcal{Q}$ covers the full vocabulary, both error terms vanish and the exact equivalence in Theorem~\ref{thm:grad_equiv_rewrite} is recovered.

The same approximation error also propagates to the split interface. Let
\begin{equation}
\label{eq:finite_cache_split_gap}
\Delta_{h_c}
\triangleq
\nabla_{h_c}\mathcal{L}_{\mathrm{A\mbox{-}Loss}}
-
\nabla_{h_c}\mathcal{L}_{\mathrm{CE}}.
\end{equation}
Then, by the chain rule,
\begin{equation}
\label{eq:finite_cache_split_gap_chain}
\Delta_{h_c}
=
\left(
\frac{\partial h_s}{\partial h_c}
\right)^\top
\Delta_{h_s}.
\end{equation}

Therefore, under a finite history cache, the quality of the split-interface gradient returned to the client is fully determined by how well the cache approximates the dominant competitors in the full-vocabulary Softmax. This result is important for our system design: it shows that L-shaped SFT does not require exact vocabulary-level supervision at every step. Instead, as long as the history cache captures sufficiently informative negatives, the server can provide a gradient signal that remains close to that of standard autoregressive fine-tuning, while avoiding the bidirectional communication loop of U-shaped SFT.\footnote{As for the privacy of uploaded target embeddings, the client can further protect them using a client-private reversible transformation. Specifically, the client applies a private transformation matrix \(A\) and uploads only transformed embeddings. Since \(A\) is kept locally, the server cannot directly recover embeddings or tokens using the public output head. When needed, the client applies \(A^T\) to restore the original embedding.}

\section{Implementation}

\subsection{Real-world Testbed}
We deploy our framework on a \textbf{real-world wireless testbed} with heterogeneous edge clients. The server is a Dell PowerEdge T640 equipped with two NVIDIA A800 GPUs. The client side consists of five physical devices: 

(i) \textbf{two NVIDIA Jetson Orin Nano 4GB developer boards:} Each Jetson Orin Nano 4GB is built on the NVIDIA Orin SoC and provides up to 20 INT8 TOPS of AI compute, with a 512-core NVIDIA Ampere GPU, 16 Tensor Cores, a 6-core Arm Cortex-A78AE CPU, and 4 GB LPDDR5 memory; on each board, we further configure 8 GB swap memory.  

(ii) \textbf{three Huawei nova 9 Pro smartphones:} Each Huawei nova 9 Pro is equipped with 8 GB RAM and a Qualcomm Snapdragon 778G mobile chipset. 

The client-server communication is carried over a \textbf{Wi-Fi 6} network. 

For the software, model fine-tuning is implemented using \textit{PyTorch} and \textit{Transformers} \cite{DBLP:journals/corr/abs-1910-03771} on the server. The overall federated orchestration is managed by \textbf{\textit{Flower}}\cite{gao2025flowertune}, and the communication layer is implemented with the \textit{Flower C++ SDK}. On the client side, the smartphones run \textbf{\textit{MobileFineTuner}}\cite{geng2025mobilefinetunerunifiedendtoendframework,geng2026edgeflowertuneevaluatingfederatedllm}, a C++ framework for fine-tuning LLMs on commercial Android phones, while the Jetson devices use \textit{PyTorch} and \textit{Transformers}. Communication overhead is measured using Wireshark by recording the actual traffic exchanged between clients and the server during fine-tuning. The full system implementation comprises 17,412 lines of code. Our code is open-sourced at \url{https://github.com/BESTTOOLBOX/Just-Talk-Once}.

\begin{table*}[t]
\centering
\caption{Comparison with U-shaped SFT under \textbf{full fine-tuning}. For MMLU, we report Accuracy ($\uparrow$); for WikiText-2, we report Perplexity (PPL, $\downarrow$).}
\vspace{-3mm}
\label{tab:full_finetune_dual_task}
\resizebox{0.9\textwidth}{!}{%
\begin{tabular}{l|l|c|c|cc|cc|cc}
\toprule
\multirow{2}{*}{\textbf{Task}} & \multirow{2}{*}{\textbf{Model}} & \multirow{2}{*}{\textbf{Size}} & \multirow{2}{*}{\textbf{Pretrained}} & \multicolumn{2}{c|}{\textbf{Finetuned Performance}} & \multicolumn{2}{c|}{\textbf{Comm. Cost on Act. and Grad. (GB)}} & \multicolumn{2}{c}{\textbf{Overall Latency (min)}} \\
& & & & \textbf{U-shaped SFT} & \textbf{L-shaped SFT} & \textbf{U-shaped SFT} & \textbf{L-shaped SFT (Red.)} & \textbf{U-shaped SFT} & \textbf{L-shaped SFT (Red.)} \\
\midrule
\multirow{4}{*}{\shortstack{\textbf{MMLU}\\(Acc \% $\uparrow$)}} 
& Gemma3-270M   & 270M & 25.00 & 27.78 & 27.59 & 1.61 & \textbf{0.61 ($\downarrow$61.92\%)} & 190.42 & \textbf{79.93 ($\downarrow$58.02\%)} \\
& GPT2-Medium   & 355M & 21.63 & 25.61 & 25.60 & 1.22 & \textbf{0.61 ($\downarrow$49.80\%)} & 180.29 & \textbf{79.41 ($\downarrow$55.95\%)} \\
& Qwen2.5-0.5B  & 0.5B & 41.87 & 45.63 & 45.24 & 1.71 & \textbf{1.29 ($\downarrow$24.71\%)} & 252.69 & \textbf{167.16 ($\downarrow$33.85\%)} \\
& GPT2-Large    & 744M & 23.02 & 26.19 & 26.10 & 2.93 & \textbf{0.98 ($\downarrow$66.54\%)} & 432.92 & \textbf{127.20 ($\downarrow$70.62\%)} \\
\midrule
\multirow{4}{*}{\shortstack{\textbf{WikiText-2}\\(PPL $\downarrow$)}} 
& Gemma3-270M   & 270M & 7200.86 & 34.28 & 35.06 & 1.83 & \textbf{1.37 ($\downarrow$25.01\%)} & 271.26 & \textbf{178.29 ($\downarrow$34.27\%)} \\
& GPT2-Medium   & 355M & 48.80 & 35.56 & 35.45 & 2.93 & \textbf{2.19 ($\downarrow$25.02\%)} & 432.92 & \textbf{284.22 ($\downarrow$34.35\%)} \\
& Qwen2.5-0.5B  & 0.5B & 30.17 & 16.66 & 16.63 & 2.56 & \textbf{1.92 ($\downarrow$24.97\%)} & 379.03 & \textbf{248.91 ($\downarrow$34.33\%)} \\
& GPT2-Large    & 744M & 43.63 & 23.04 & 23.14 & 3.66 & \textbf{2.75 ($\downarrow$24.99\%)} & 540.87 & \textbf{355.02 ($\downarrow$34.36\%)} \\
\bottomrule
\end{tabular}%
}
\vspace{-3mm}
\end{table*}

\begin{table*}[t]
\centering
\caption{Comparison with U-shaped SFT under \textbf{LoRA fine-tuning}. For MMLU, we report Accuracy ($\uparrow$); for WikiText-2, we report Perplexity (PPL, $\downarrow$).}
\vspace{-3mm}
\label{tab:lora_finetune_dual_task}
\resizebox{0.9\textwidth}{!}{%
\begin{tabular}{l|l|c|c|cc|cc|cc}
\toprule
\multirow{2}{*}{\textbf{Task}} & \multirow{2}{*}{\textbf{Model}} & \multirow{2}{*}{\textbf{Size}} & \multirow{2}{*}{\textbf{Pretrained}} & \multicolumn{2}{c|}{\textbf{Finetuned Performance}} & \multicolumn{2}{c|}{\textbf{Comm. Cost on Act. and Grad.(GB)}} & \multicolumn{2}{c}{\textbf{Overall Latency (min)}} \\
& & & & \textbf{U-shaped SFT} & \textbf{L-shaped SFT} & \textbf{U-shaped SFT} & \textbf{L-shaped SFT (Red.)} & \textbf{U-shaped SFT} & \textbf{L-shaped SFT (Red.)} \\
\midrule
\multirow{7}{*}{\shortstack{\textbf{MMLU}\\(Acc \% $\uparrow$)}} 
& Gemma3-1B    & 1B   & 26.39 & 49.56 & 50.75 & 2.19 & \textbf{1.65 ($\downarrow$24.71\%)} & 324.54 & \textbf{214.43 ($\downarrow$33.93\%)} \\
& Llama3.2-1B  & 1B   & 44.44 & 49.05 & 49.00 & 2.73 & \textbf{0.59 ($\downarrow$78.49\%)} & 403.80 & \textbf{76.23 ($\downarrow$81.12\%)} \\
& Qwen3-1.7B   & 1.7B & 49.60 & 58.93 & 58.35 & 2.93 & \textbf{2.15 ($\downarrow$26.62\%)} & 432.64 & \textbf{317.61 ($\downarrow$26.59\%)} \\
& Llama3.2-3B  & 3B   & 49.65 & 52.97 & 52.01 & 5.86 & \textbf{1.77 ($\downarrow$69.88\%)} & 865.23 & \textbf{228.66 ($\downarrow$73.57\%)} \\
& Gemma3-4B    & 4B   & 34.80 & 51.87 & 50.78 & 2.44 & \textbf{1.84 ($\downarrow$24.71\%)} & 360.52 & \textbf{238.20 ($\downarrow$40.26\%)} \\
& Qwen3-4B     & 4B   & 65.67 & 74.49 & 74.06 & 2.44 & \textbf{1.45 ($\downarrow$40.56\%)} & 364.27 & \textbf{217.60 ($\downarrow$39.64\%)} \\
& Llama3.1-8B  & 8B   & 58.13 & 72.70 & 72.51 & 5.86 & \textbf{2.88 ($\downarrow$50.81\%)} & 865.20 & \textbf{462.17 ($\downarrow$46.58\%)} \\
\midrule
\multirow{7}{*}{\shortstack{\textbf{WikiText-2}\\(PPL $\downarrow$)}} 
& Gemma3-1B    & 1B   & 676.42 & 22.29 & 22.62 & 3.29 & \textbf{2.47 ($\downarrow$25.00\%)} & 486.80 & \textbf{319.53 ($\downarrow$34.36\%)} \\
& Llama3.2-1B  & 1B   & 25.53 & 14.69 & 15.36 & 5.86 & \textbf{0.73 ($\downarrow$87.50\%)} & 865.29 & \textbf{94.66 ($\downarrow$89.06\%)} \\
& Qwen3-1.7B   & 1.7B & 37.57 & 15.52 & 15.55 & 5.87 & \textbf{2.19 ($\downarrow$62.50\%)} & 869.48 & \textbf{283.99 ($\downarrow$67.34\%)} \\
& Llama3.2-3B  & 3B   & 20.52 & 15.55 & 12.57 & 8.79 & \textbf{1.09 ($\downarrow$87.50\%)} & 1297.85 & \textbf{141.99 ($\downarrow$89.06\%)} \\
& Gemma3-4B    & 4B   & 2796.21 & 13.28 & 15.94 & 8.54 & \textbf{3.66 ($\downarrow$57.14\%)} & 1261.83 & \textbf{473.30 ($\downarrow$62.49\%)} \\
& Qwen3-4B     & 4B   & 29.90 & 14.26 & 14.19 & 7.32 & \textbf{1.83 ($\downarrow$75.00\%)} & 1081.57 & \textbf{236.65 ($\downarrow$78.12\%)} \\
& Llama3.1-8B  & 8B   & 21.00 & 12.37 & 12.25 & 11.72 & \textbf{8.79 ($\downarrow$25.00\%)} & 1730.40 & \textbf{1135.87 ($\downarrow$34.36\%)} \\
\bottomrule
\end{tabular}%
}
\vspace{-4mm}
\end{table*}

\subsection{Data Tasks}

We selected two common LLM benchmark tasks for testing.

\begin{itemize}
\item \textbf{Text Generation Task}: We use the WikiText-2 dataset \cite{Devlin2019BERTPO}. We fine-tune the large models on this dataset and observe their perplexity (PPL), which measures the model’s ability to predict the next word in a sequence. To simulate a federated environment, we simply partition the dataset into 5 distinct subsets, assigning one to each client. 
Given the topical clustering of Wikipedia articles, this naive partitioning strategy inherently yields a non-IID data distribution.

\item \textbf{Text-based Question Answering Task}: We use the MMLU (Massive Multitask Language Understanding) dataset \cite{hendrycks2020measuring}. For this task, we input both the question and answer options into the model, and the model generates a response. The predicted answer is then extracted from the model's output, and the model’s accuracy is evaluated by comparing the predicted answer to the correct answer. Crucially, to ensure rigorous evaluation and prevent test-set contamination, we strictly utilize the 'Train' split for fine-tuning and the 'Test' split for evaluation.
To simulate a realistic non-IID federated environment, we first shuffle the training set and then partition it among 5 clients using a Dirichlet distribution \cite{hsu2019measuring} ($\alpha=0.8$) based on the answer choices.

\end{itemize}

\begin{figure}[t]
\centering
\includegraphics[width=0.4\textwidth]{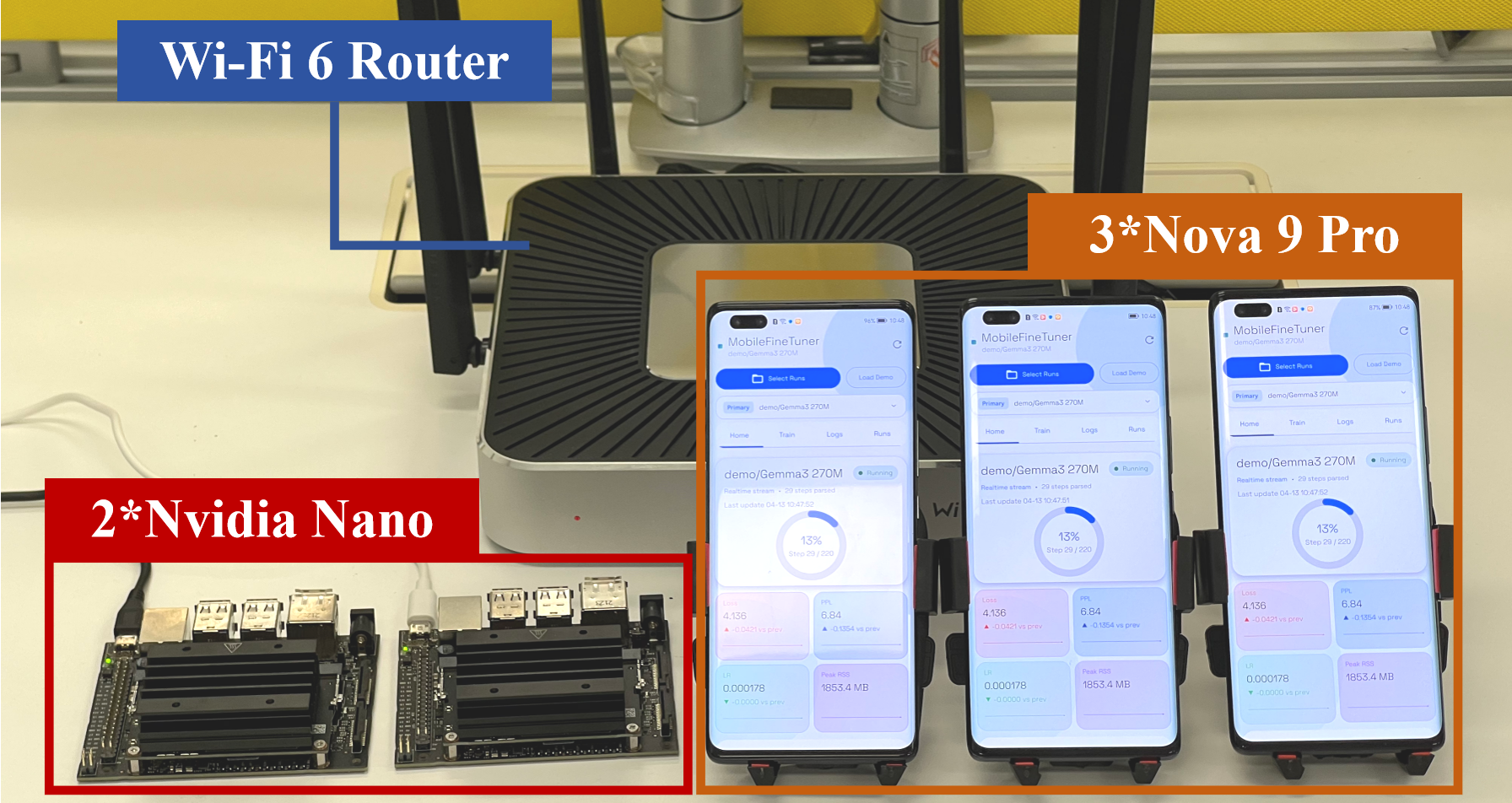}
\vspace{-3mm}
\caption{Our Real-world Testbed: 2*Nvidia Nano, 3*Nova 9 Pro, and Wi-Fi 6 Router.}
\label{test_bed}
\vspace{-5mm}
\end{figure}

\vspace{-2mm}
\subsection{Models}

In our experiments, we perform two types of fine-tuning: full fine-tuning and LoRA fine-tuning. The models used in these experiments vary in size, and we have selected them based on the memory requirements and performance needs of different approaches.

\subsubsection{Full Fine-Tuning}

Full fine-tuning involves training all model parameters, which can be highly memory-intensive. To address the memory constraints, especially for mobile devices, we chose smaller-scale language models with fewer than 1 billion parameters. These models are optimized for efficiency while maintaining competitive performance in various natural language processing (NLP) tasks.

The models selected for full fine-tuning include: GPT-2 Medium (355M), GPT-2 Large (774M), Qwen2.5-0.5B and Gamma3-270M. These models are ideal for mobile environments due to their relatively small size and lower memory consumption and can be used for full fine-tuning.

\subsubsection{LoRA Fine-Tuning}

LoRA fine-tuning enables more efficient adaptation of larger models by focusing on low-rank updates rather than full parameter tuning \cite{hu2022lora}. This technique is especially beneficial when working with larger models, as it reduces the computational and memory overhead.

For LoRA fine-tuning, we selected larger models with more than 1 billion parameters to leverage their advanced capabilities in generating and understanding complex language patterns. These models include: Qwen3-1.7B, Qwen3-4B, Llama3.2-1B, Llama3.2-3B, Llama3.1-8B, Gamma3-1B, Gamma3-4B.

\vspace{-2mm}
\section{Results}
In this section, we report the experimental results of the proposed L-shaped SFT. Our primary baseline is the U-shaped SFT framework. We emphasize that U-shaped SFT is the widely adopted split-training framework used in prior systems such as EUSFL~\cite{zhang2025edge}, MPSL~\cite{fudala2025fine}, FlexP-SFL~\cite{yuan2025flexiblepersonalizedsplitfederated}, and MobiLLM~\cite{yang2025fedmobillmefficientfederated}. For a fair framework-level comparison, we remove the method-specific auxiliary designs in these systems and retain only their shared original U-shaped pipeline as the baseline. This is because the main difference of our method lies in the split-training framework itself, rather than in any additional optimization module. Therefore, the results in this section are designed to isolate the effect of replacing the conventional bidirectional U-shaped pipeline with our L-shaped one. Since our framework redesign is orthogonal to those extra algorithmic components, it can in principle be combined with them.

\vspace{-2mm}
\subsection{Overall Comparison with U-shaped SFT}
\subsubsection{L-shaped SFT}
We first compare the proposed L-shaped SFT with the conventional U-shaped SFT framework. In this group of experiments, both methods use the same split setting, where each client holds the first two hidden layers. Both methods communicate at every local training step and aggregate client-side parameters once every 20 steps. The batch size is 8 and the sequence length is 128. We emphasize that the communication cost reported in Tables~\ref{tab:full_finetune_dual_task} and \ref{tab:lora_finetune_dual_task} only counts the transmission of \textit{activations and gradients} across the split interface, and excludes periodic parameter aggregation, since that part is identical for U-shaped SFT and L-shaped SFT.
\
\\
\noindent\textbf{Full Fine-Tuning.} 
L-shaped SFT achieves performance that is highly comparable to U-shaped SFT across both MMLU and WikiText-2, while both methods consistently outperform the corresponding pretrained models. At the same time, L-shaped SFT substantially reduces communication and end-to-end latency. On MMLU, the communication reduction ranges from 25\% to 66\%, and the latency reduction ranges from 34\% to 70\%. On WikiText-2, the communication reduction is consistently around 25\%, with latency reduction around 34\%.

\noindent\textbf{LoRA Fine-Tuning.}
Across the broader set of 1B-8B models, L-shaped SFT remains competitive with U-shaped SFT in final quality while continuing to reduce communication and overall latency. In particular, on several settings such as Llama3.2-1B and Llama3.2-3B, the communication and latency savings exceed 70\%, and on WikiText-2 the reductions reach as high as 87\% in communication and 89\% in latency. These results confirm that the main advantage of L-shaped SFT lies in improving system efficiency while preserving competitive fine-tuning quality.

\vspace{-2mm}
\subsubsection{One-shot SFT}
We next evaluate one-shot SFT under full fine-tuning setting using Qwen2.5-0.5B. For U-shaped SFT and L-shaped SFT, the client still keeps the first two hidden layers. In contrast, for one-shot SFT, the client performs only a one-time local forward pass and then uploads the required target embeddings to the server, after which the server continues training without further client participation. Table~\ref{tab:oneshot_compare} reports the corresponding system-level comparison. Different from Tables~\ref{tab:full_finetune_dual_task} and \ref{tab:lora_finetune_dual_task}, where the communication cost only counts activation and gradient transmission across the split interface, the communication cost reported here is the overall communication cost, including client-side model parameter upload and download. 

As shown in Table~\ref{tab:oneshot_compare}, one-shot SFT achieves the lowest communication cost, the lowest overall latency, and the shortest client active time. While L-shaped SFT already removes the bidirectional bottleneck of U-shaped SFT, one-shot SFT goes one step further: the client only needs to upload the embeddings corresponding to its local data to the server once, after which the client can directly remain idle.

\subsubsection{Discussion on L-shaped and one-shot SFT}
As shown in Fig.~\ref{fig:mmlu_comm} and Fig.~\ref{fig:mmlu_latency}, both L-shaped SFT and one-shot SFT become increasingly communication- and latency-efficient as the number of local data traversals grows\footnote{Here, ``traversals'' denote the number of times the fine-tuning process passes over the local data. They are used only to analyze how the reduction of L-shaped SFT and one-shot SFT changes during training, and should not be confused with communication rounds.}. In particular, the overall communication reduction achieved by L-shaped SFT keeps increasing with more repeated passes over the same local data. This is because the target embeddings in L-shaped SFT only need to be transmitted once when the local data is first processed, after which the server can reuse the cached embeddings in later traversals without requiring repeated transmission. This explains why many reduction values reported in Section~6.1.1 are already larger than 25\%. One-shot SFT is even more efficient, since it only requires a single communication stage and then completes the remaining optimization entirely on the server side. Therefore, as local data is revisited multiple times, one-shot SFT maintains the largest reduction.

\begin{table}[t]
\centering
\caption{System-level comparison of \textbf{U-shaped SFT}, \textbf{L-shaped SFT}, and \textbf{one-shot SFT}. (Full Fine-tuning on Qwen2.5-0.5B)}
\vspace{-2mm}
\label{tab:oneshot_compare}
\resizebox{0.45\textwidth}{!}{
\begin{tabular}{l|l|c|c|c}
\toprule
\textbf{Task} & \textbf{Method} & \shortstack{\textbf{Overall Comm.}\\\textbf{Cost (GB)}} & \shortstack{\textbf{Overall}\\\textbf{Latency (min)}} & \shortstack{\textbf{Client Active}\\\textbf{Time (min)}} \\
\midrule
\multirow{3}{*}{MMLU}
& U-shaped & 4.79 & 252.69 & 240.66 (95\%) \\
& L-shaped & 4.37 & 167.16 & 133.78 (80\%) \\
& One-shot & 0.98 & 45.57 & 4.83 (11\%)\\
\midrule
\multirow{3}{*}{WikiText-2}
& U-shaped & 6.26 & 379.03 & 362.45 (96\%) \\
& L-shaped & 5.61 & 248.91 & 196.52 (77\%) \\
& One-shot & 1.15 & 48.91 & 6.13 (13\%)\\
\bottomrule
\end{tabular}
}
\vspace{-2mm}
\end{table}

\begin{figure}[t]
\centering
\begin{subfigure}[t]{0.23\textwidth}
    \centering
    \includegraphics[width=\textwidth]{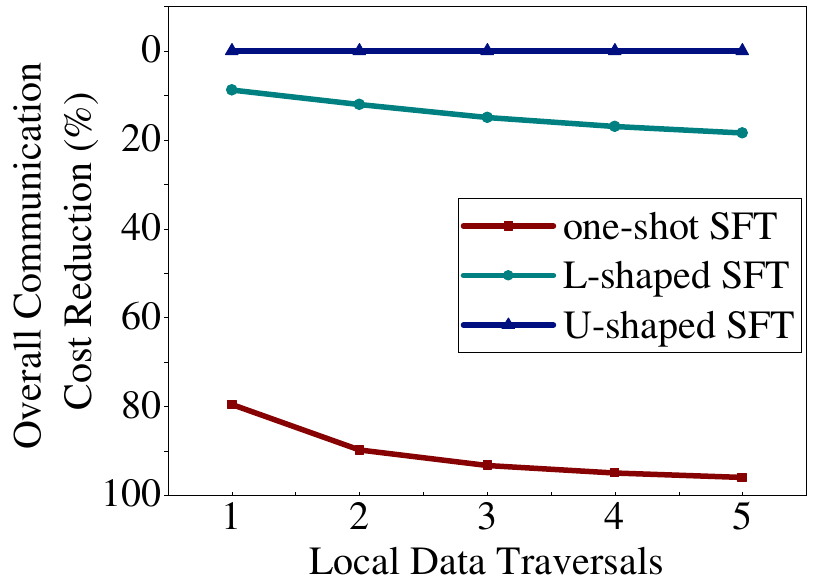}
    \vspace{-6mm}
    \caption{Communication reduction.}
    \label{fig:mmlu_comm}
\end{subfigure}
\hfill
\begin{subfigure}[t]{0.22\textwidth}
    \centering
    \includegraphics[width=\textwidth]{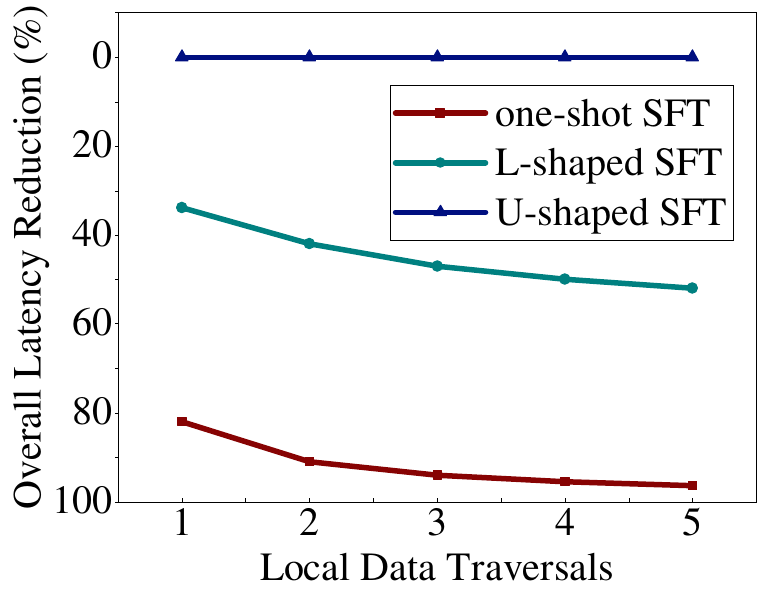}
    \vspace{-6mm}
    \caption{Latency reduction.}
    \label{fig:mmlu_latency}
\end{subfigure}
\vspace{-4mm}
\caption{Reduction versus the number of local data traversals on MMLU with Qwen2.5-0.5B under full fine-tuning.}
\label{fig:mmlu_reduction}
\vspace{-5mm}
\end{figure}

\subsection{Ablation Studies}

\subsubsection{Different number of client-side layers}
We next study the impact of different client-side layer configurations. As shown in Table~\ref{tab:local_layers}, when the number of client-side layers changes from \textit{Emb + Head} to \textit{Emb + Head + 2 Blks}\footnote{``blks'' in Tab.~\ref{tab:local_layers} means hidden layers.}, the communication cost on activations and gradients remains unchanged, while the final accuracy only fluctuates slightly. This result is consistent with the design of our framework: under the same hidden-state interface, the amount of activation and gradient transmission does not depend on how many shallow layers are placed on the client side, and therefore the communication saving remains stable across different local-layer settings. At the same time, increasing the number of local layers significantly raises client-side memory consumption. \textit{Emb + Head + 6 Blks} leads to an out-of-memory failure on our client device. This highlights an important practical constraint in edge deployment: excessively deep client-side allocation is undesirable on memory-limited devices.

\begin{table}[]
\centering
\caption{Impact of Local Layer Configuration on Client Resource Consumption (MMLU on Llama3.2-1B).}
\vspace{-2mm}
\label{tab:local_layers}
\resizebox{0.43\textwidth}{!}{%
\begin{tabular}{l|c|c|c}
\toprule
\textbf{Local Config.} & \shortstack{\textbf{Client Memory}\\\textbf{Consumption}} & \shortstack{\textbf{Comm. Cost on}\\\textbf{Act. and Grad.}} & \textbf{Accuracy} \\
\midrule
Emb + Head & 1.2 GB & 0.59 GB & 48.89\% \\
Emb + Head + 1 Blk & 2.8 GB & 0.59 GB & 48.93\% \\
Emb + Head + 2 Blks & 4.4 GB & 0.59 GB & 49.00\% \\
Emb + Head + 6 Blks & Out of Memory & - & - \\
\bottomrule
\end{tabular}%
}
\vspace{-3mm}
\end{table}

\begin{figure}[t]
\centering
\begin{minipage}[t]{0.23\textwidth}
    \centering
    \includegraphics[width=\textwidth]{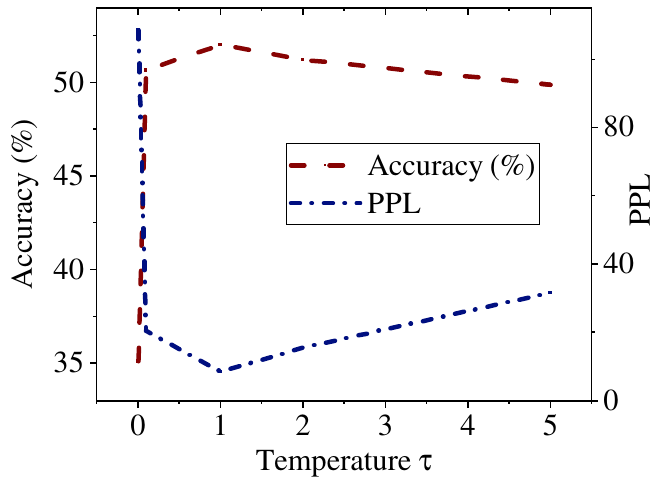}
    \vspace{-8mm}
    \captionof{figure}{Impact of $\tau$.}
    \label{fig:temperature}
\end{minipage}
\hfill
\begin{minipage}[t]{0.23\textwidth}
    \centering
    \includegraphics[width=0.9\textwidth]{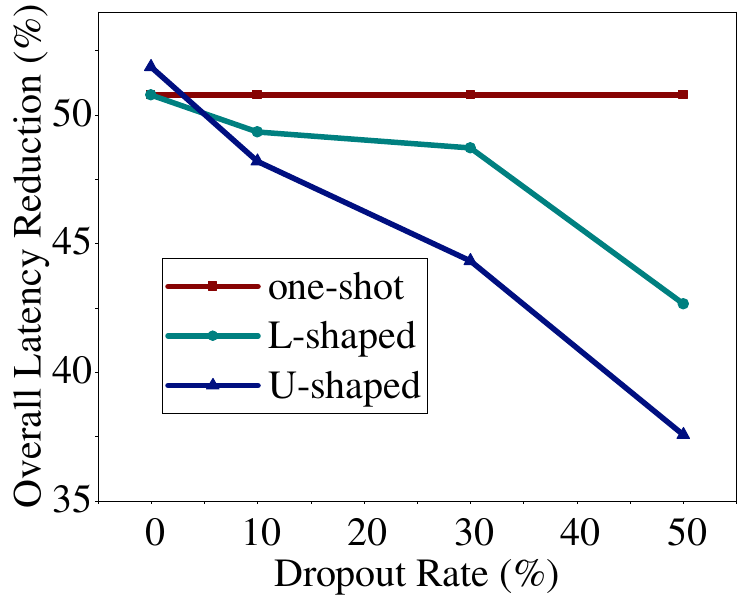}
    \vspace{-4mm}
    \captionof{figure}{Impact of dropout.}
    \label{fig:dropout}
\end{minipage}
\vspace{-5mm}
\end{figure}

\subsubsection{Impact of Temperature Coefficient $\tau$}

We next examine the impact of the temperature coefficient $\tau$ in A-Loss. As indicated by Theorem~\ref{thm:grad_equiv_rewrite}, $\tau=1$ is the theoretically most principled choice, since it recovers the same gradient form as the standard cross-entropy loss under the ideal conditions discussed in Section~4. Empirically, as shown in Fig.~\ref{fig:temperature}, the results show that model performance is best when $\tau$ is set around $1$. In contrast, when $\tau$ becomes too small, the contrastive distribution becomes excessively sharp, causing the optimization to focus too aggressively on a few dominant candidates. This makes training unstable and can even prevent convergence. On the other hand, when $\tau$ becomes too large, the distribution becomes overly smooth, weakening the contrast between the target embedding and negative embeddings. As a result, the supervision signal becomes less discriminative, and the fine-tuning effectiveness deteriorates.

\subsubsection{Robustness to Client Dropout.}

We also evaluate the robustness under client dropout, which commonly occurs in real-world edge environments due to battery depletion, network instability, or temporary device unavailability. As shown in Fig.~\ref{fig:dropout}, one-shot SFT is the most robust to dropout. This is because one-shot SFT only requires a single communication stage: as long as a client successfully comes online once and uploads all required embeddings, subsequent disconnection has little effect on training. L-shaped SFT is also relatively robust. Since the server maintains a history cache and the training process does not rely on the tight bidirectional synchronization required by U-shaped SFT, client dropout mainly reduces the amount of fresh data participation, rather than causing severe disruption to optimization. In contrast, U-shaped SFT is much more vulnerable to dropout because it depends on continuous step-level client-server interaction throughout training. Therefore, as the dropout rate increases, the performance of U-shaped SFT degrades much more sharply, while one-shot SFT remains largely unaffected and L-shaped SFT degrades more gracefully.

\vspace{-2mm}
\section{Conclusion}

In this paper, we presented L-shaped SFT, a communication-efficient split federated fine-tuning framework for LLMs. By moving supervision to the server side and removing the bidirectional bottleneck of conventional U-shaped SFT, L-shaped SFT substantially reduces communication costs and overall latency while maintaining competitive fine-tuning quality. Building on this design, we further introduced one-shot SFT, which minimizes client active time by compressing client participation into a single upload stage and allowing the remaining optimization to proceed entirely on the server. Our real-world testbed results on heterogeneous edge devices show that these designs make split LLM fine-tuning substantially more practical in bandwidth-constrained and intermittently connected environments. Although we do not address the potential privacy leakage of uploaded activations, this issue is a common challenge faced by all split fine-tuning and split federated learning frameworks. Our contribution is orthogonal to activation-level privacy protection such as differential privacy \cite{charles2024finetuning} and homomorphic encryption \cite{chen2022x}.

\vspace{-2mm}
\bibliography{citations}
\bibliographystyle{ACM-Reference-Format}

\end{document}